\newtheorem{theorem}{Theorem}
\newtheorem{lemma}{Lemma}
\theoremstyle{definition}
\newtheorem{example}{Example}
\newtheorem{remark}{Remark}
\newcommand{\ignore}[1]{}
\newcommand{\R}{{\mathbb{R}}}
\newcommand{\Z}{{\mathbb Z}}
\newcommand{\G}{{\mathcal{G}}}
\newcommand{\eg}{{\it e.g.}}
\newcommand{\ie}{{\it i.e.}}
\newcommand{\etc}{{\it etc}}
\newcommand{\ones}{\mathbf 1}
\DeclareMathOperator{\tr}{\mathop{Trace}}
\DeclareMathOperator{\diag}{diag}
\definecolor{PennBlue}{RGB}{001,031,091}
\definecolor{PennRed}{RGB}{153,0,0}
\definecolor{NewBlue}{RGB}{001,031,110}
\definecolor{NewRed}{RGB}{200,0,0}
\begin{document}

\title{\fontsize{21}{21}%
\selectfont {Structural Analysis and Optimal Design of Distributed System Throttlers}}
\author{\fontsize{13}{13} {Milad Siami} \thanks{{\footnotesize The author performed the work as a summer intern at Google, NYC.}} \thanks{{\footnotesize Massachusetts Institute of Technology (MIT), Cambridge, MA 02139, USA. Email:  {\tt\small siami@mit.edu}.}}  \hspace{0.5in}{Jo\"{e}lle Skaf}\thanks{{\footnotesize Google Inc., 76 9th Avenue, New York, NY 10011. Email: {\tt\small jskaf@google.com}.}}  \\
~}
\date{\vspace{-5ex}}
\maketitle

\begin{abstract}
\fontsize{13}{13}\selectfont \baselineskip0.5cm
  In this paper, we investigate the performance analysis and synthesis of distributed system throttlers (DST). A throttler is a mechanism that limits the flow rate of incoming metrics, \eg, byte per second, network bandwidth usage, capacity, traffic,  \etc. This can be used to protect a service's backend/clients from getting overloaded, or to reduce the effects of uncertainties in demand for shared services. We study performance deterioration of DSTs subject to demand uncertainty. We then consider network synthesis problems that aim to improve the performance of noisy DSTs via communication link modifications as well as server update cycle modifications.
\newline
\newline
\newline
\end{abstract}
\fontsize{12}{12}\selectfont


\thispagestyle{empty} \newpage \fontsize{12}{12}\selectfont\baselineskip %
0.60cm

\onehalfspacing
\section{Introduction}
\allowdisplaybreaks

 System throttling (also known as rate-limiting) aims to limit the total number of requests from all clients to a shared service and provide a harmonized and fair quota allocation among them (where the definition of fairness is application-specific). Examples of systems in need of throttling protection include cloud-based services and traffic management services. A number of works on rate-limiting systems and congestion control have been published in the recent literature \cite{Xia, Tan, Zhang, Johari, Gibbens, Kelly,Raghavan,Stanojevic}.

System throttlers can be classified into centralized and distributed types. In a centralized system throttler (CST), there is a single decision maker that sets the per-client limits according to aggregated metrics it receives from multiple servers, which in turn aggregate them from metrics reported by the clients. CSTs are designed based on a globally aggregated view of usage metrics.
On the other hand, a distributed system throttler (DST) does not have a centralized mechanism for setting per-client limits: it consists of multiple servers, each of which makes autonomous decisions and updates its own limit based on measurements it takes as well as local information. 

While the centralized approach has benefits, including consistency and ease of implementation and analysis, it also has drawbacks relative to a decentralized version: (i) Less local adaptability: in a centralized version, each server needs to send information to the decision-making server and wait for its command, which means a delayed response time. (ii) Limited communication: there is no inter-server communication except to the decision-making server. Moreover, we want to facilitate information propagation in order to improve the performance and to make the network more flexible and fast when handling uncertainty in demand. 

There are some related works in the literature that study performance and robustness issues in noisy linear
distributed systems; for example, see \cite{Bamieh12, SiamiTCNS, SiamiTAC, Zelazo-Mesbahi,Siami14cdc-2, LovisariGarinZampieriResistance,Lin2013, Siami13cdc, Alex} and the references therein. In \cite{Bamieh12}, the authors investigate the deviation from the mean of states of a continuous-time consensus network on tori with additive noise inputs. A rather comprehensive performance analysis of noisy linear consensus networks with arbitrary graph topologies has been recently reported in \cite{SiamiTAC}. In \cite{SiamiTAC}, several fundamental tradeoffs between a $\mathcal H_2$-based performance measure and sparsity measures of a continuous-time linear consensus network are studied. Moreover, \cite{Sharaf} studies a $\mathcal H_2$-based performance measure of continuous-time linear consensus system in the presence of a time-delay and additive noise inputs. Most of these papers treat continuous-time systems only; in discrete-time networks, however, the time-step size along with the topology of the network plays an important role on the network performance. 

We should mention that papers \cite{Raghavan} and \cite{Stanojevic} investigate the notion of distributed rate-limiting as a mechanism that controls the aggregate service used by a client of a cloud-based service. The main idea is to improve a set of cloud servers with the ability to exchange information with them towards the common purpose: control of the aggregate usage that a cloud-based service experiences. However, comprehensive performance analysis and synthesis have yet to be done for these networks with an arbitrary underlying graph. 

In this paper, our goal is to develop a unified framework for analysis and design of discrete-time distributed rate-limiting systems with a local aggregated view of usage metrics. 
We investigate performance deterioration (\eg, over-throttling, mismatch, convergence rate) of DSTs with respect to external uncertainties and the update cycle of servers. We develop a graph-theoretic framework to relate the underlying structure of the system to its overall performance measure. {We then compare the performance/robustness of DSTs with different topologies. }  
In this work, in addition to the overall performance measure for a network, each node has its own performance measure, which is one of the main differentiators between this work and some other related work \cite{Bamieh12, SiamiTCNS, SiamiTAC,Sharaf}. 

{The rest of this paper is organized as follows. In Section \ref{sec:0}, we present some basic mathematical concepts and notations employed in this paper. In Section \ref{sec:1}, we define and study a distributed system throttler (DST). In Section \ref{sec:2}, we evaluate the overall performance of a DST with a given nodal performance measure. In Subsection \ref{sec:impact}, we study the impact of the server update cycle on performance. In Subsection \ref{sec:synthesis}, two synthesis problems are studied. In Section \ref{sec:numerical}, some numerical results are demonstrated. In Section \ref{sec:algorithms}, we focus on throttling algorithms which are used by servers. In Section \ref{sec:conclusion}, we conclude our work and suggest directions for future research. }

\section{Mathematical Notation}
\label{sec:0}
Throughout the paper, 
the discrete time index is denoted by $k$. The sets of real (integer), positive real (integer), and strictly positive real (integer) numbers are represented by $\R$ ($\mathbb Z$), $\R_+$ ($\mathbb Z_+$) and $\R_{++}$ ($\mathbb Z_{++}$), respectively. Capital letters, such as $A$ or $B$, stand for real-valued matrices. We use $\diag(x_1, x_2, \ldots, x_n)$ to denote a $n$-by-$n$ diagonal square matrix with $x_1$ to $x_n$ on its diagonal. For a square matrix $X$, $\tr(X)$ refers to the
summation of on-diagonal elements of $X$.  We represent the $n$-by-$1$ vector of ones by $\ones$.  The $n$-by-$n$ identity matrix is denoted by $I$. The Moore-Penrose pseudoinverse of matrix $A$ is denoted by $A^{\dag}$, \ie, $A^{\dag}=\left(A + \frac{1}{n}\ones\ones^T\right)^{-1}-\frac{1}{n}\ones\ones^T$. 
We assume that all graphs are connected, undirected, simple graphs.
We represent graph $\mathcal G$ by $(V, E, w)$, where $V$ is the node set, $E$ is the edge set, and $w:E \rightarrow \mathbb R_+$ is the link weight function. 
We denote by $L$ the Laplacian matrix of the weighted graph $\mathcal G$ with the following eigenvalues
\begin{equation}
  \lambda_1=0 \leq \lambda_2 \leq \cdots \leq \lambda_n.
\end{equation}
Since we assume in this work that all graphs are connected, it follows that $\lambda_2 > 0$.

The {\it effective resistance} between nodes $i$ and $j$ is defined by:
	\begin{equation}
		r_{ij} ~:=~ l_{ii}^{\dag}+l_{jj}^{\dag}-l_{ji}^{\dag}-l_{ij}^{\dag}
	\label{eff-resis}
	\end{equation}
where $l_{ji}^{\dag}$ is the $(i,j)$th entry in $L^{\dag}$. 
The white Gaussian noise with zero mean and variance $\sigma^2$ is represented by $v \sim N(0,\sigma^2)$.

\section{A Distributed System Throttler}
\label{sec:1}
\allowdisplaybreaks

A distributed system throttler (DST) is a graph $\mathcal G$ with $n$ nodes. Each node in the graph is a server with assigned clients that can send it requests. Links in the graph represent communication channels between servers. {The global goal of a DST is to keep the aggregate number of accepted requests from all clients for a shared service at or below a prescribed level.
The DST does not have a centralized mechanism for setting per-client limits. Instead it consists of multiple servers, each of which makes its own decisions and updates its own limit based on its own measurements and local information from its neighbors (on graph $\G$). Fig. \ref{Figure:DST} depicts an example of a distributed throttler with six nodes (servers).}

\begin{figure}
  \centering
  \includegraphics[trim = 0 0 0 0, clip,width=0.5\textwidth]{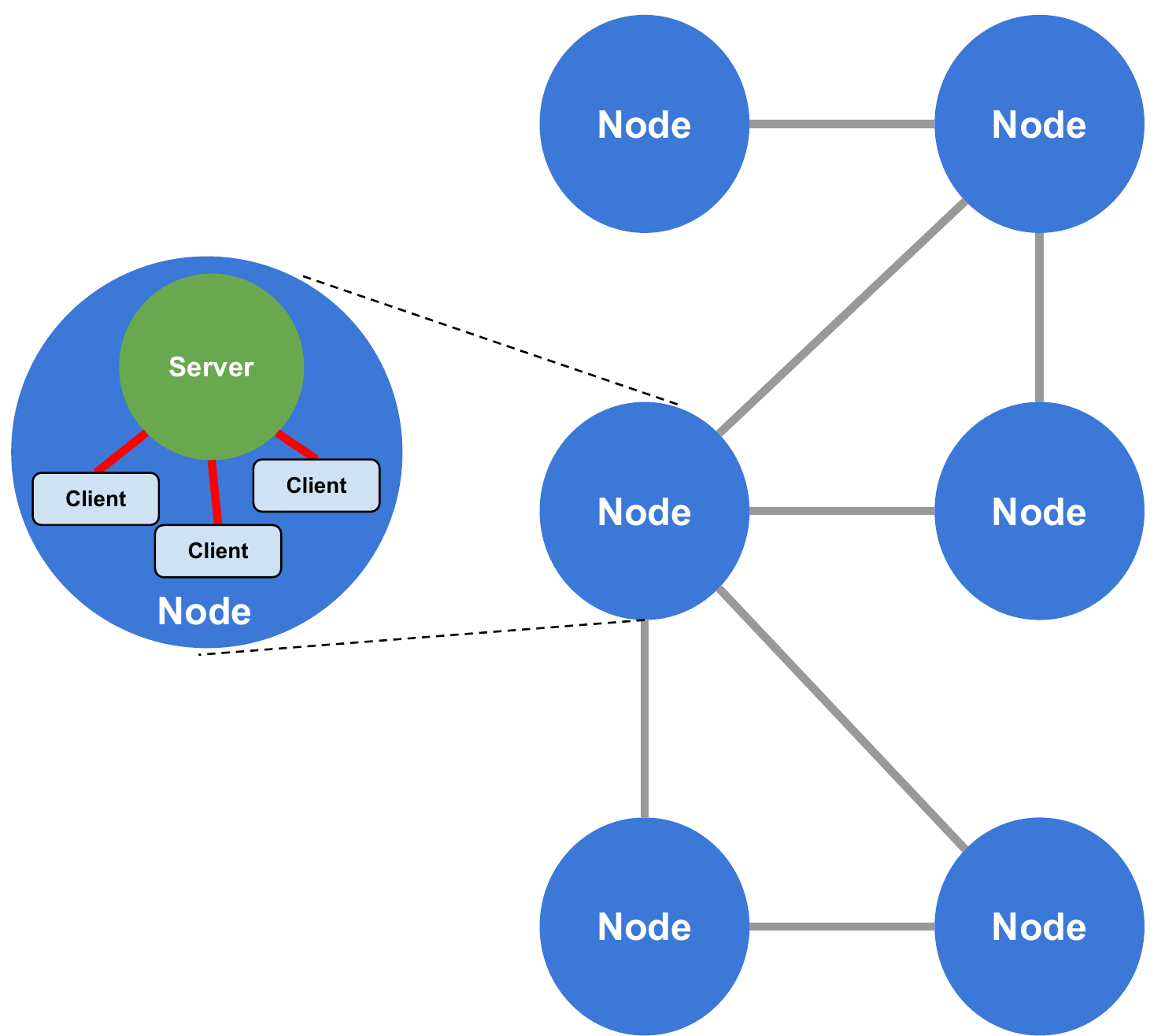}
  \caption{An example of a  distributed system throttler (DST) with 6 servers. Nodes show servers and links present communication links between servers. }
  \label{Figure:DST}
\end{figure}

Let's denote by $r_i(k)$ the total number of client requests received by server $i$ at time $k$. Each node has a total limit on the number of requests that it is allowed to service at time $k$ represented by $x_i(k)$. It is also associated with a performance measure $p_i(k)$ which represents how well that node is working at time $k$. Examples of typical performance measures are: over-throttling at time $k$, the ratio of the total allowed usage to total requested usage, or any function of $r_i(k)$, $x_i(k)$, and time. We will talk about functional properties of the performance measure later on in this paper (see Table I). 
%
\begin{table*}[h]
  \centering
  \caption{Examples of nodal performance measures.}
  \label{Table:PerformanceMeasures}
  \begin{tabular}{|l|l|l|}
    \cline{1-3}
    Case I  & Amount of throttled traffic & $p_i(k) := r_i(k)- x_i(k)$   \\ \cline{1-3}
    Case II  & Throttled-to-requested traffic ratio & $p_i(k) := \big(r_i(k) - x_i(k)\big)/r_i(k)$ \\ \cline{1-3}
    Case III  & Logarithm of requested-to-allowed traffic ratio & $p_i(k) := \log \big(r_i(k)/x_i(k)\big)$  \\ \cline{1-3}
    Case IV  & Amount of allowed traffic & $p_i(k) := x_i(k)$  \\ \cline{1-3}
  \end{tabular}
\end{table*}

In this setup, 
we assume that each node updates its state $x_i(k)$ based on its neighbors' states and performance measures (\ie, a local aggregated view of usage metrics). The update law is given by the following difference equation:
\begin{equation}
  x_i(k+1)~=~x_i(k) \, +\, \gamma \sum_{i \sim j }w_{ij}\big(p_i(k) - p_j(k)\big),~~ k \in \mathbb Z_{+},
  \label{model_DST_DT}
\end{equation}
where $i \sim j$ denotes that nodes $i$ and $j$ are connected by a link in the underlying graph, $w_{ij}=w(\{i,j\})$ is the weight of link $\{i,j\}$ in graph $\mathcal G$, and parameter $\gamma$ is a positive number which depends on the size of the time step (\ie, $x(k):=x(k \Delta t)$ where $\Delta t = \gamma$). In Subsection \ref{sec:impact}, $\gamma$ is referred to as the server update cycle, and its effect on the performance analysis will be discussed.

%
The dynamics of the entire network can be written in the following compact form
\begin{equation}
  x(k+1)~=~x(k)\, + \, \gamma \, L\,  p(k),~~k\in \mathbb Z_{+},
  \label{eq:model_DT}
\end{equation}
where $x(k)$ is an $n$-by-$1$ vector of node limits at time $k$, $p(k)$ is an $n$-by-$1$ vector of nodal performance measures at time $k$, and $L$ is the Laplacian matrix of the coupling graph $\mathcal G$. {Then, the accepted number of requests at server $i$ at time $k$ is given by $a_i(k ):= \min \left \{ x_i(k), r_i(k) \right \}$.} 
The total number of requests, the total limit, and the total number of accepted requests for the entire network are defined by
\begin{equation}
r_\text{total}(k)~:=~\sum_{i=1}^{n} r_i(k),
\end{equation} 
\begin{equation}
l_\text{total} ~:=~\sum_{i=1}^{n} x_i(0),
\end{equation}
and
\begin{equation}
a_\text{total}(k)~:=~\sum_{i=1}^{n} \min \left \{ x_i(k), r_i(k) \right \},
\end{equation}
respectively.
The ideal curve for total number of accepted requests is given by 
\begin{equation}
a_\text{ideal}(k)~:=~\min \left \{ l_\text{total}, r_\text{total}(k) \right \}.
\label{a_ideal}
\end{equation}


{We should note that in compact form \eqref{eq:model_DT}, weights do not disappear, and are encoded in matrix $L$. Here $L$ is the Laplacian matrix of weighted graph $\mathcal G$. Hence, off-diagonal elements of the matrix represent $-w_{ij}$'s. }

The following lemma shows that the total nodal limit is fixed over time.

\begin{lemma}
  \label{lemma_1}
The total summation of nodal limits is fixed over time, which means
\begin{equation}
   \sum_{i=1}^n x_i(k)~=~{\sum_{i=1}^n x_i(0)},~~~{\text{for all~} k \in \mathbb Z_{++}}.
  \label{eq:total_sum_limits}
\end{equation}
\end{lemma}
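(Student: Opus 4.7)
The plan is to exploit the fact that the Laplacian matrix $L$ of any undirected weighted graph has the all-ones vector $\ones$ in its left null space, i.e., $\ones^T L = 0$. This is the algebraic reflection of the fact that each row (and column) of a Laplacian sums to zero, and it is exactly what makes consensus-type dynamics conservative with respect to the aggregate quantity $\ones^T x$.

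First, I would rewrite the claim \eqref{eq:total_sum_limits} in the compact form $\ones^T x(k) = \ones^T x(0)$, using the notation already introduced in \eqref{eq:model_DT}. Then I would left-multiply the network dynamics \eqref{eq:model_DT} by $\ones^T$, obtaining
\begin{equation*}
  \ones^T x(k+1) ~=~ \ones^T x(k) \,+\, \gamma \, \ones^T L \, p(k).
\end{equation*}
Since $\ones^T L = 0$ regardless of the choice of the nodal performance vector $p(k)$ and the step size $\gamma$, the extra term vanishes and we obtain $\ones^T x(k+1) = \ones^T x(k)$ for every $k \in \mathbb Z_+$. A short induction on $k$ then yields $\ones^T x(k) = \ones^T x(0)$ for all $k \in \mathbb Z_{++}$, which is exactly \eqref{eq:total_sum_limits}.

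There is essentially no obstacle here: the only substantive point is the Laplacian identity $\ones^T L = 0$, which follows immediately from the definition of $L$ for a weighted undirected graph and has already been implicitly invoked by the construction in Section \ref{sec:0}. The fact that the result holds independently of the form chosen for the nodal performance measure $p_i(k)$ (any of the Cases I--IV in Table \ref{Table:PerformanceMeasures}, or any other) is worth emphasizing, since this invariance is what justifies comparing the state $x(k)$ against the fixed total budget $l_{\text{total}}$ in the definition \eqref{a_ideal} of the ideal curve.
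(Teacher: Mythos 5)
Your proof is correct and follows essentially the same route as the paper's: left-multiply the compact dynamics \eqref{eq:model_DT} by $\ones^T$ and use the fact that the column sums of the Laplacian of an undirected graph vanish, so the aggregate $\ones^T x(k)$ is invariant. The only cosmetic difference is that you make the induction on $k$ explicit and omit the paper's (essentially redundant for finite $p(k)$) remark that the $p_i(k)$'s are bounded.
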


\begin{proof}
We multiply both sides of \eqref{eq:model_DT} by $\ones^T$ on the left and get
\begin{equation}
  \sum_{i=1}^n x_i(k+1)~=~\sum_{i=1}^n x_i(k) \,+\, \gamma \, \mathbf 1^T\, L \, p(t).
  \label{214}
\end{equation}
Assume that $p_i(k)$'s are bounded. Since $L$ is the Laplacian matrix of an undirected graph, its row and column sums are zero which, completes the proof.
\end{proof}
 
{Based on this result, the total sum of nodal limits is constant and it depends only on initial values, \ie, $l_\text{total}$.}
A similar result is reported in \cite{Stanojevic}, which guarantees the capacity constraint for a generalized distributed rate-limiting system. The condition \eqref{214} holds for any linear consensus network even for those over directed graphs. 

In the next section, we study the overall performance of DST networks based on their nodal performance measure and the behavior of incoming network traffic.
\section{Properties of Typical Nodal Performance Measures}
\label{sec:2}

Each node $i$ is associated with a performance measure $p_i(k)$, which shows the performance of server $i$ at time $k$. 
Some examples of performance measures are presented in Table \ref{Table:PerformanceMeasures}.


In this section, we choose $p_i(k)$ to be the number of throttled requests at node $i$ at time $k$
\begin{equation}
  p_i(k)~:=~r_i(k) - x_i(k).
\end{equation}
Then, \eqref{model_DST_DT} can be rewritten as
\begin{equation}
  p(k+1)~=~\left(I-\gamma \, L\right) p(k) \,+\, \big ( r(k+1)-r(k) \big ),~~  k \in \mathbb Z_{+}.
  \label{eq-7}
\end{equation}

Based on the behavior of incoming network traffic/requests, two cases are considered.
\subsection{Steady loads}
Let us assume that the number of client requests incoming at node $i$ is constant across time:
\begin{equation}
  r_i(k+1) -r_i(k)~=~0,~~k\in \mathbb Z_{+}.
  \label{steady}
\end{equation}
%
Equation \eqref{eq-7} can then be simplified as below
\begin{equation}
  p(k+1)~=~ \left(I-\gamma \, L\right)  p(k),~~k \in \mathbb Z_{+}.
  \label{eq-9}
\end{equation}
\begin{lemma}\cite{Xiao03fastlinear} 
\label{lemma2}
  For any $i,j \in \{1,2, \cdots, n\}$, we have
  \[ \lim_{k\rightarrow \infty}| p_i(k) - p_j(k)|~=~ 0,\]  if and only if $\max \{ 1- \gamma \lambda_2, \gamma \lambda_n-1\} <1$.
\end{lemma}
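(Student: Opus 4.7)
The plan is to diagonalize the iteration matrix $M := I - \gamma L$ and read off the convergence condition. Since $L$ is the Laplacian of an undirected connected graph, it is symmetric positive semidefinite with eigenvalues $0 = \lambda_1 < \lambda_2 \leq \cdots \leq \lambda_n$ and an orthonormal eigenbasis $u_1, u_2, \ldots, u_n$ where $u_1 = \mathbf{1}/\sqrt{n}$. The matrix $M$ is symmetric and shares this eigenbasis, with eigenvalues $\mu_i = 1 - \gamma \lambda_i$; in particular $\mu_1 = 1$. Writing $p(0) = \sum_{i=1}^n c_i u_i$ and iterating \eqref{eq-9} gives $p(k) = \sum_{i=1}^n c_i (1-\gamma\lambda_i)^k u_i$.

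The component along $u_1$ is a constant multiple of $\mathbf{1}$ and therefore contributes equally to every coordinate of $p(k)$, so it drops out of any difference $p_i(k) - p_j(k)$. Each remaining component along $u_i$ (for $i \geq 2$) decays to zero coordinatewise iff $|1-\gamma\lambda_i|<1$. For sufficiency of the spectral condition I would bound $|p_i(k)-p_j(k)|$ using the triangle inequality and the decay of $(1-\gamma\lambda_i)^k$. For necessity, I would initialize $p(0) = u_i$ for an offending index $i$; since $u_i \perp \mathbf{1}$, it has two coordinates that differ, and the corresponding difference $|p_i(k)-p_j(k)|$ equals $|1-\gamma\lambda_i|^k$ times a nonzero constant, which does not tend to zero.

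Finally, I would reconcile $\max_{i \geq 2} |1-\gamma\lambda_i| < 1$ with the stated form. Because $\lambda \mapsto |1-\gamma\lambda|$ is piecewise linear and convex, its maximum over $[\lambda_2,\lambda_n]$ is attained at an endpoint, so the condition reduces to $\max\{|1-\gamma\lambda_2|,\,|1-\gamma\lambda_n|\} < 1$. A short sign check then confirms this equals $\max\{1-\gamma\lambda_2,\,\gamma\lambda_n-1\}$: when $\gamma\lambda_2 \leq 1$ the first absolute value equals $1-\gamma\lambda_2$ and dominates the (possibly negative) second expression, whereas when $\gamma\lambda_2 > 1$ we have $\gamma\lambda_n - 1 \geq \gamma\lambda_2 - 1 = |1-\gamma\lambda_2|$, so the second expression dominates.

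The only place requiring any care is this last bookkeeping step identifying the two forms of the spectral radius condition; the rest is standard spectral analysis of a symmetric averaging iteration.
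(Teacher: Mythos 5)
Your proof is correct and complete. The paper itself offers no argument here---it cites \cite{Xiao03fastlinear} and declares the proof ``straightforward''---so there is nothing to diverge from: your spectral decomposition of $I-\gamma L$ in the orthonormal Laplacian eigenbasis, with the consensus mode along $\mathbf 1$ cancelling in every pairwise difference and the initialization $p(0)=u_i$ furnishing necessity, is exactly the standard argument being invoked, and your endpoint/sign check correctly identifies $\max_{i\geq 2}|1-\gamma\lambda_i|<1$ with the stated condition $\max\{1-\gamma\lambda_2,\,\gamma\lambda_n-1\}<1$.
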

\begin{proof}
It is straightforward.
\end{proof}

Based on this result, as long as graph $\mathcal G$ is connected we can find a positive $\gamma$, which guarantees reaching a consensus state (for a small enough positive number $\gamma$).

We can now study the convergence rate based on properties of the underlying graph and the design parameter $\gamma$.

Let us define the following performance measure which shows the convergence rate of the DST
\begin{equation}
 \Phi_{\rm cr} ~=~ \max_{i\geq 2}|1- \gamma \lambda_i|,
\end{equation}
a {smaller $ \Phi_{\rm cr}$ meaning faster asymptotic convergence.}

\begin{remark}[Role of Topologies for Small $\gamma$]
Networks with $n$ servers can be ranked based on their convergence rates; consequently, the path graph topology has the worst convergence rate and the complete graph has the best convergence rate (for small enough $\gamma$). Also, it can be shown that among tree graphs, star graphs have the best rate and path graphs have the worst. $\diamond$
\end{remark}

\vspace{.1cm}
\subsection{Non-Steady Loads}
Assumption \eqref{steady} is strong and can be relaxed. Let us assume that
\begin{equation}
  v_i(k+1)~:=~r_i(k  + 1 ) - r_i(k)
  \label{non-steady}
\end{equation}
where  $v(k) \in \R^n$ is a zero mean random vector such as
\begin{eqnarray}
  &&\mathbb E \left[v(k)\right]~=~\mathbf 0,\nonumber \\
  && \mathbb E \left[ v(k)v^T(k)\right]~=~{\rm Cov}(v), \nonumber \\
  && \mathbb E \left[ v(k)v^T(s)\right]~=~\mathbf 0, ~~ k \neq s.
  \label{eq:v}
\end{eqnarray}
%
Then, \eqref{eq-7} can then be simplified as below
\begin{equation}
  p(k+1)~=~ \left(I -\gamma L\right) p(k) \,+\,  v(k+1),~~ k\in \mathbb Z_{+}.
  \label{model_DST_II}
\end{equation}

We can now define the following overall performance measure for the network
\begin{equation}
  \Phi_{\rm ss}~=~\lim_{t \rightarrow \infty} \mathbb E  \left[\frac{1}{2n}\sum_{i,j}\left ( p_i(k)-p_j(k)\right  )^2\right ],
\label{overall_performance}
\end{equation}
%
%
The quantity \eqref{overall_performance} shows the steady-state dispersion of $p_i$'s from their average \cite{SiamiTAC, SiamiTCNS, Bamieh12}.

%
%
The following theorem presents a closed-form formula for the overall performance of DST \eqref{model_DST_II}, based on the Laplacian matrix of the underlying graph and the covariance matrix of the input vector $v$.

\begin{theorem}
  \label{th-1}
  For a given DST \eqref{model_DST_II}, the overall performance measure \eqref{overall_performance} can be quantified as
\begin{equation}
  \Phi_{\text{ss}} =\frac{1}{2 \gamma}{\rm Trace} \left[\left(L-\frac{\gamma}{2}L^2\right)^{\dag}{\rm{Cov}}(v)\right],
\end{equation}
where ${\rm{Cov}}(v)$ is the covariance matrix of random vector $v(k)$. 
\end{theorem}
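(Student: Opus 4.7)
First I would rewrite the performance measure in matrix form. A direct expansion gives
\[
\tfrac{1}{2n}\sum_{i,j}\big(p_i-p_j\big)^2 \;=\; p^T M_n p, \qquad M_n := I-\tfrac{1}{n}\ones\ones^T,
\]
so that $\Phi_{\rm ss}=\lim_{k\to\infty}\mathbb{E}\big[p(k)^T M_n p(k)\big]=\lim_{k\to\infty}\tr\!\big(M_n\,\mathbb{E}[p(k)p(k)^T]\big)$. The obstacle is that $(I-\gamma L)$ has an eigenvalue $1$ along $\ones$, so $\mathbb{E}[p(k)p(k)^T]$ does not converge; however $M_n$ annihilates that mode.

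Next I would project: set $\tilde p(k):=M_n p(k)$. Because $L\ones=0$ and $\ones^T L=0$, one checks that $M_n L=L M_n=L$, and therefore $M_n(I-\gamma L)=(I-\gamma L)M_n$. Applying $M_n$ to \eqref{model_DST_II} yields the reduced dynamics
\[
\tilde p(k+1) \;=\; (I-\gamma L)\,\tilde p(k) \;+\; M_n v(k+1),\qquad k\in\Z_+.
\]
Restricted to $\ones^\perp$, the matrix $(I-\gamma L)$ is a contraction under the hypothesis of Lemma \ref{lemma2}, so the second-moment matrix $\tilde P := \lim_{k\to\infty}\mathbb{E}[\tilde p(k)\tilde p(k)^T]$ exists and satisfies the discrete-time Lyapunov equation
\[
\tilde P \;=\; (I-\gamma L)\,\tilde P\,(I-\gamma L) \;+\; M_n\,{\rm Cov}(v)\,M_n.
\]
Note that $\Phi_{\rm ss}=\tr(M_n \tilde P)=\tr(\tilde P)$ since $M_n\tilde P=\tilde P$.

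I would then solve this Lyapunov equation in the orthonormal eigenbasis of $L$. Writing $L=\sum_{i=1}^n \lambda_i q_i q_i^T$ with $q_1=\ones/\sqrt{n}$, and setting $\alpha_{ij}:=q_i^T\tilde P q_j$, the Lyapunov equation decouples into
\[
\alpha_{ij}\big[1-(1-\gamma\lambda_i)(1-\gamma\lambda_j)\big] \;=\; q_i^T M_n {\rm Cov}(v) M_n q_j.
\]
For $i\ge 2$ one has $M_n q_i=q_i$, so the right-hand side equals $q_i^T{\rm Cov}(v) q_j$; the modes with $i=1$ or $j=1$ vanish on both sides and contribute nothing to $\tilde P$. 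In particular
\[
\alpha_{ii}\;=\;\frac{q_i^T{\rm Cov}(v)q_i}{1-(1-\gamma\lambda_i)^2}\;=\;\frac{q_i^T{\rm Cov}(v)q_i}{2\gamma\lambda_i\big(1-\tfrac{\gamma}{2}\lambda_i\big)},\qquad i\ge 2.
\]

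Finally I would identify the sum $\tr(\tilde P)=\sum_{i\ge 2}\alpha_{ii}$ with the claimed trace. Since $\big(L-\tfrac{\gamma}{2}L^2\big)^{\dag}=\sum_{i\ge 2}\tfrac{1}{\lambda_i(1-\tfrac{\gamma}{2}\lambda_i)}\,q_i q_i^T$, one has
\[
\tr\!\Big[\big(L-\tfrac{\gamma}{2}L^2\big)^{\dag}{\rm Cov}(v)\Big]\;=\;\sum_{i\ge 2}\frac{q_i^T{\rm Cov}(v)q_i}{\lambda_i(1-\tfrac{\gamma}{2}\lambda_i)}\;=\;2\gamma\,\tr(\tilde P),
\]
which yields $\Phi_{\rm ss}=\tfrac{1}{2\gamma}\tr\!\big[(L-\tfrac{\gamma}{2}L^2)^{\dag}{\rm Cov}(v)\big]$, as desired. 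The only delicate step is the bookkeeping for the kernel of $L$; the centering matrix $M_n$ makes this transparent because it kills exactly the eigenmode where $(I-\gamma L)$ fails to contract.
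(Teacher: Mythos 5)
Your proof is correct and follows essentially the same route as the paper: both reduce $\Phi_{\rm ss}$ to a discrete Lyapunov equation whose solution on $\ones^{\perp}$ is $(2\gamma L-\gamma^{2}L^{2})^{\dag}$. The only cosmetic difference is that you solve for the state covariance $\tilde P$ and evaluate $\tr(M_n\tilde P)$, whereas the paper uses the dual (observability) Gramian $Q$ satisfying $(I-\gamma L)Q(I-\gamma L)^{T}-Q+M_n=\mathbf 0$ and evaluates $\tr(Q\,{\rm Cov}(v))$ --- identical computations since $I-\gamma L$ is symmetric --- and your eigenbasis bookkeeping simply supplies the details the paper compresses into ``by doing some calculation.''
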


\begin{proof}
  The overall performance measure is the same as the squared $\mathcal H_2$-norm of a discrete linear time invariant system \eqref{model_DST_II}. Therefore, the measure can be quantified as follows:
  \begin{equation}
    \Phi_{\rm ss}~=~ {\rm Trace}\left [ Q \, {\rm Cov}(v) \right],
    \label{eq:trace}
  \end{equation}
  where $Q \succeq \mathbf 0 $ is the solution of the following discrete Lyapunov equation
  \[ (I -\gamma L) Q (I -\gamma L)^{T} ~-~ Q~ +~ \big({ I - 1/n \mathbf 1 \mathbf 1^T}\big) ~=~\mathbf 0.\]
  By doing some calculation it follows that
  \begin{equation}
    Q ~= ~ {(2\gamma L-\gamma^2 L^2)^{\dag}}.
    \label{matrix_P}
  \end{equation}
 Using \eqref{eq:trace} and \eqref{matrix_P} we get the desired result. 
\end{proof}

\begin{remark}[Independent $v_i$'s]
In the case where $v_i$'s are independent then ${\rm Cov}(v)$ is a diagonal matrix 
$\gamma\diag(\sigma_1^2,\ldots,\sigma_n^2)$ where $\sigma_i$ depends on the property of signal $r_i$. We get

\begin{equation}
  \Phi_{\text{ss}}~=~\frac{1}{2} \sum_{i=1}^n  c_{ii}^{\dag}\sigma^2_{i},
  \label{decompose_formula}
\end{equation}
where $(L- \frac{\gamma}{2} L^2)^{\dag}=[c_{ij}^{\dag}]$. Based on \eqref{decompose_formula}, we can obtain a centrality measure for servers. Indeed, $c_{ii}^{\dag}$ shows the impact of server $i$ on the overall performance. 
See \cite{SiamiTCNS} for more details on centrality measures with respect to $\mathcal H_2$-norm of the system (the focus of \cite{SiamiTCNS} is on the class of continuous-time linear consensus networks however.) $\diamond$
\end{remark}

\begin{remark}[Independent and identically-distributed $v_i$'s]
Based on Theorem \ref{th-1}, the overall performance measure of the network can be calculated based on spectral eigenvalues of the coupling graph and the variance of changing demands (i.e., $r_i(k+1) -r_i(k) \sim  N(0 ,\gamma \, \sigma^2)$) as follows
\begin{equation}
  \Phi_{\rm ss}\,=\,
  \begin{cases}
    \sum_{i=2}^n \frac{\sigma^2}{\lambda_i(2-\gamma \, \lambda_i)}, ~0 < \lambda_i < 2/\gamma~~\text{for}~ i=2, \cdots, n,\\
    \infty, ~~~~~~~~~~~~~~~~~\,\text{otherwise}.
    \end{cases}
  \label{eq:overall_performance}
\end{equation}
We note that condition $0 < \lambda_i < 2/\gamma$ for $i=2, \cdots, n$ is the same as the one needed for the system without noise to converge (cf., Lemma \ref{lemma2}). $\diamond$
\end{remark}
The quantity \eqref{eq:overall_performance} has a close connection with the ``total effective resistance''
of an electric network as follows
\begin{equation}
  \lim_{\gamma \rightarrow 0} \Phi_{\rm ss} ~=~ \frac{\sigma^2}{2n}\sum_{i>j}r_{ij},
\end{equation}
where $r_{ij}$ is the effective resistance between node $i$ and $j$, \ie,
\[ r_{ij} ~:=~ l_{ii}^{\dag}~+~l_{jj}^{\dag}~-~l_{ij}^{\dag}~-~l^{\dag}_{ji}, ~~ L^{\dag}=[l_{ij}^{\dag}]. \]
For more details see \cite{Ghosh}.


\begin{remark} [Another interpretation of the overall measure] Let us assume that $r_i(0)$'s are given with the normal distribution, and $r_i$'s remain constant. Then the expected total mismatch loss can be obtained based on 
\begin{eqnarray}
  &&\mathbb E \left[\frac{1}{n} \sum_{k=0}^\infty\sum_{i>j}\left ( p_i(k)-p_j(k)\right  )^2 \Delta t \right]~=~ \nonumber \\
  &&~~~~~\frac{1}{2 \gamma}{\rm Trace} \left[\left(L-\frac{\gamma}{2}L^2\right)^{\dag}{\rm{Cov}}(v)\right] ~=~ \Phi_{\text{ss}}.
\end{eqnarray}
$\diamond$
\end{remark}

\vspace{.1cm}
Due to space limitations, other nodal performance measures defined in Table \ref{Table:PerformanceMeasures} are briefly analyzed in the appendix.
\section{DST Optimization Problems}
\subsection{Impact of the Server Update Cycle}
\label{sec:impact}
{In this subsection, we study the effect of the server update cycle $\gamma$ on our analysis. As shown in Section \ref{sec:2}, the overall performance measure of a DST depends on its Laplacian eigenvalues and the server update cycle.
  To enhance the overall performance of the network, one can obtain the optimal update cycle for all servers.

The following theorem presents the optimal update cycle for a DST in the case of steady loads (\ie, when the number of client requests is constant across time).
\vspace{.1cm}
  \begin{theorem}
  \label{th:2}
    For a given DST \eqref{eq-9} with a graph $\mathcal G$, the optimal update cycle is given by
    \begin{equation}
      \gamma_{\rm optimal}~=~\frac{2}{\lambda_2 + \lambda_n}.
    \end{equation}
    \end{theorem}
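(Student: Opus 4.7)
The plan is to interpret ``optimal update cycle'' as the value of $\gamma > 0$ that minimizes the convergence-rate measure $\Phi_{\rm cr}(\gamma) = \max_{i \geq 2} |1 - \gamma \lambda_i|$ introduced just after Lemma \ref{lemma2}, since Theorem \ref{th:2} is phrased for the steady-load dynamics \eqref{eq-9} whose only quantitative performance criterion is the asymptotic rate at which $p(k)$ reaches consensus. The approach is then a classical Chebyshev-style min-max over a finite set of numbers.

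The first step I would take is to reduce the inner $\max$ to just the two extreme indices $i = 2$ and $i = n$. For any fixed $\gamma > 0$ the affine map $\lambda \mapsto 1 - \gamma \lambda$ is monotone, so its absolute value on the interval $[\lambda_2, \lambda_n]$ is maximized at one of the endpoints; consequently $\Phi_{\rm cr}(\gamma) = \max\{\,|1 - \gamma \lambda_2|,\ |1 - \gamma \lambda_n|\,\}$. Restricting attention to $\gamma$ in the stability window identified in Lemma \ref{lemma2} (outside of which $\Phi_{\rm cr} \geq 1$ and consensus fails anyway), this reduces to $\max\{\,1 - \gamma \lambda_2,\ \gamma \lambda_n - 1\,\}$.

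The second step is the one-dimensional min-max itself. The first branch is a strictly decreasing affine function of $\gamma$ and the second is a strictly increasing one, so their pointwise maximum is a convex, V-shaped piecewise-linear function whose unique minimizer is exactly the crossing point. Setting the two branches equal gives $\gamma(\lambda_2 + \lambda_n) = 2$, i.e., the claimed $\gamma_{\rm optimal} = 2/(\lambda_2 + \lambda_n)$; as a byproduct one reads off the optimal convergence factor $(\lambda_n - \lambda_2)/(\lambda_n + \lambda_2)$.

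I do not anticipate any real obstacle: once the inner $\max$ is reduced to the two extreme eigenvalues, the remainder is a routine one-dimensional convex problem. The only subtlety worth spelling out is that the eigenvalue $\lambda_1 = 0$ must be excluded from the $\max$, which is legitimate because it corresponds to the conserved quantity $\ones^T p$ lying in the kernel of $L$ and therefore invisible to the consensus-error dynamics.
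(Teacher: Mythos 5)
Your proposal is correct and follows essentially the same route as the paper: both reduce $\max_{i\geq 2}|1-\gamma\lambda_i|$ to $\max\{1-\gamma\lambda_2,\ \gamma\lambda_n-1\}$ and locate the minimizer at the crossing point of the decreasing and increasing affine branches, yielding $\gamma_{\rm optimal}=2/(\lambda_2+\lambda_n)$. Your added observations (excluding $\lambda_1=0$ via the conserved quantity, and the resulting optimal factor $(\lambda_n-\lambda_2)/(\lambda_n+\lambda_2)$) are accurate refinements of the same argument.
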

\vspace{.1cm}
  \begin{proof} We need to solve the following convex optimization
\begin{eqnarray}
  &&\underset{\gamma > 0}{\rm minimize}~~\max_{i \geq 2} \left |1- \gamma \lambda_i \right | .
\end{eqnarray}
It is not difficult to see that $2(\lambda_2 + \lambda_n)^{-1}$ minimizes the cost function. We have 
\[ 0 < \lambda_2 \leq \cdots \leq \lambda_n, \]
and, accordingly, we can rewrite the cost function as follows
\begin{eqnarray}
  &&\max_{i \geq 2} |1- \gamma \lambda_i|  = \max \left \{ 1 - \gamma \lambda_2, \gamma \lambda_n -1\right \}.
  \label{550}
\end{eqnarray}
To minimize \eqref{550}, we need
\begin{equation}
1 - \gamma \lambda_2 ~=~\gamma \lambda_n -1,
\label{556}
\end{equation}
since if $1 - \gamma \lambda_2 \neq \gamma \lambda_n -1$,  one can decrease the cost function by increasing or decreasing $\gamma$ . Therefore, the optimal $\gamma$ is the solution of \eqref{556}. This completes the proof.
  \end{proof}

  In the case of non-steady loads, having a closed-form formula for the optimal update time based on the Laplacian eigenvalues seems difficult. However, one can obtain the solution by solving the following convex optimization problem:
\begin{eqnarray}
  &&\underset{\gamma > 0}{\rm minimize}~~ \frac{1}{2\gamma} {\rm Trace} \left[\left(L-\frac{\gamma}{2}L^2\right)^{\dag}{\rm{Cov}}(v)\right].
\end{eqnarray}
{In the case where $v_i$'s are independent and identically-distributed (\ie, ${\rm{Cov}}(v)=\gamma\diag(\sigma^2,\ldots,\sigma^2)$), the optimal update time can be bounded from above and below by $1/\lambda_2$ and $1/\lambda_n$, respectively.}

\subsection{DST Synthesis Problems}
\label{sec:synthesis}
In this subsection, we present our main results on the design of optimal distributed rate-limiting systems. We formulate our problems as convex optimization problems. The questions we are trying to answer in this section are
\begin{itemize}
\item[-] What are the optimal link weights for the {\it fastest} DST network?
\item[-] What are the optimal link weights for the {\it most robust} DST network?
\end{itemize}

Depending on which nodal and overall performance measures are chosen, one can come up with different optimal topologies.

\subsubsection{The Fastest DST Process}

Here we briefly describe the problem of finding the fastest DST on a given underlying topology, {where `fastest' means the one with the smallest $ \Phi_{\rm cr}$.}
The optimal weights can be found by solving the following optimization problem
\begin{eqnarray}
  &&\underset{w(e)}{\rm minimize}~~ \max_{i \geq 2} |1- \gamma \lambda_i|   \label{optimal_fast} \\
  &&{\rm subject~to}~~~ w(e) ~\geq~ 0,~~{\rm for~ all}~~ e \in E.  \nonumber  
\end{eqnarray}
This problem was studied before in \cite{Boyd2006}. Problem \eqref{optimal_fast} can be cast as a semidefinite programming (SDP) problem as follows
\begin{eqnarray}
  &&\underset{w(e), \theta}{\rm minimize}~~  \theta \\
  &&{\rm subject~to}~-\theta I \preceq  I - \gamma \sum_{e \in E}w(e)L_e - \frac{1}{n}\mathbf1 \mathbf 1 ^{T} \preceq \theta I, \nonumber \\
  &&~~~~~~~~~~~~~w(e) \geq 0,~~ e \in E,  \nonumber  
\end{eqnarray}
where $L_e$ is the unweighted Laplacian matrix of link $e$.

\subsubsection{The Most Robust DST Process}

Here we briefly describe the problem of finding the most robust DST on a given underlying topology, {where `most robust' means the one with the smallest $ \Phi_{\rm ss}$. }
The optimal weights can be found by solving the following problem
\begin{eqnarray}
  &&\underset{w(e)}{\rm minimize}~~ \frac{1}{2\gamma} {\rm Trace} \left[\left(L-\frac{\gamma}{2}L^2\right)^{\dag}{{\rm{Cov}}(v)}\right]     \label{optimal_robust} \\
  &&{\rm subject~to}~~~ w(e) ~\geq~ 0,~~{\rm for~ all}~~ e \in E, \nonumber \\
  &&~~~~~~~~~~~~~~~~~L~=~\sum_{e \in E}w(e)L_e, \nonumber \\ 
    &&~~~~~~~~~~~~~~~~\max_{i \geq 2} |1- \gamma \lambda_i| ~\leq~  1.  \nonumber  
\end{eqnarray}
We note that $\Phi_{\rm ss} = \frac{1}{2\gamma} {\rm Trace} \left(\left(L-\frac{\gamma}{2}L^2\right)^{\dag}{{\rm{Cov}}(v)}\right)$ is a convex function of the link weights. To find the solution of \eqref{optimal_robust} one can use a variety of standard methods for convex optimization (\eg, interior-point methods and subgradient-based methods).

\vspace{.1cm}
\begin{theorem}
Problem \eqref{optimal_robust} can be formulated as a SDP problem as follows
\begin{eqnarray}
  &&\underset{w(e), Y}{\rm minimize}~~ \frac{1}{2\gamma} {\rm Trace} \left[ Y {{\rm{Cov}}(v)}\right]  - \frac{\mathbf 1^{T} {{\rm{Cov}}(v)} \mathbf 1}{2n\gamma^2}\nonumber \\
  &&{\rm subject~to}~~~ w(e) ~\geq~ 0,~~{\rm for~ all}~~ e \in E, \nonumber  \\
  &&~~~~~~~~~~~~~~~~L~=~\sum_{e \in E}w(e)L_e, \nonumber  \\
  &&~~~~~~~~~~~~~~~~\mathbf 0~ \preceq~I -  \frac{1}{2}\left ( \gamma L + (1/n)\mathbf 1 \mathbf 1^T\right)~ \preceq~ I,  \nonumber  \\
  &&~~~~~~~~~~~~~~~\,\begin{bmatrix}
    L + \frac{1}{\gamma n}\mathbf 1 \mathbf 1 ^{T}& L & I\\
    L  & \frac{2}{\gamma }I  & \mathbf 0 \\
    I& \mathbf 0 & Y
  \end{bmatrix} ~\succeq~ 0. 
  \label{Big_matrix}
\end{eqnarray}
\end{theorem}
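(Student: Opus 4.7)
The plan is to convert three ingredients of problem \eqref{optimal_robust}---the Moore--Penrose pseudoinverse in the objective, the quadratic term $L^{2}$, and the spectral-radius constraint $\max_{i\geq 2}|1-\gamma\lambda_i|\leq 1$---into linear matrix inequalities via a sequence of Schur-complement arguments, so that the problem matches the SDP template in \eqref{Big_matrix}.

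First I would handle the spectral constraint. The inequality $\max_{i\geq 2}|1-\gamma\lambda_i|\leq 1$ is equivalent to $0\leq \gamma\lambda_i\leq 2$ for $i\geq 2$. Adding the projector $\tfrac{1}{n}\mathbf 1\mathbf 1^{T}$ shifts the trivial eigenvalue $\lambda_{1}=0$ up to $1$, already inside the desired interval, and leaves the other eigenvalues of $\gamma L$ untouched; hence the spectral requirement lifts to the two-sided LMI $\mathbf 0\preceq I-\tfrac12(\gamma L+\tfrac{1}{n}\mathbf 1\mathbf 1^{T})\preceq I$, which is the third inequality in \eqref{Big_matrix}.

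Next I would linearize the $L^{2}$ term using the factorization $L-\tfrac{\gamma}{2}L^{2}=L-L\bigl(\tfrac{2}{\gamma}I\bigr)^{-1}L$, which exhibits $M:=L-\tfrac{\gamma}{2}L^{2}$ as the Schur complement of $\tfrac{2}{\gamma}I$ inside a $2\times 2$ block matrix with diagonal blocks $L$ and $\tfrac{2}{\gamma}I$. To handle the pseudoinverse I would regularize $M$ by a rank-one shift $\alpha\mathbf 1\mathbf 1^{T}$, choosing $\alpha>0$ so that $\hat M:=M+\alpha\mathbf 1\mathbf 1^{T}$ is positive definite under the spectral constraint of Step~1. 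Because $M$ and $\mathbf 1\mathbf 1^{T}$ share the eigenvector $\mathbf 1/\sqrt n$, a direct spectral computation yields $M^{\dagger}=\hat M^{-1}-\beta\mathbf 1\mathbf 1^{T}$ for the matching scalar $\beta$, so that
\[
\tfrac{1}{2\gamma}\mathrm{Trace}\!\left[M^{\dagger}\,\mathrm{Cov}(v)\right]
\;=\;\tfrac{1}{2\gamma}\mathrm{Trace}\!\left[\hat M^{-1}\,\mathrm{Cov}(v)\right]\;-\;\tfrac{\beta}{2\gamma}\,\mathbf 1^{T}\mathrm{Cov}(v)\,\mathbf 1.
\]
The second term is a data-dependent constant which, once $\alpha$ is calibrated to the top-left block in \eqref{Big_matrix}, matches the offset $\tfrac{\mathbf 1^{T}\mathrm{Cov}(v)\mathbf 1}{2n\gamma^{2}}$ appearing in the SDP objective.

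The final step is to introduce the matrix variable $Y$ acting as an upper bound on $\hat M^{-1}$. Since $\mathrm{Cov}(v)\succeq 0$, the map $Y\mapsto\mathrm{Trace}(Y\,\mathrm{Cov}(v))$ is monotone on the set $\{Y: Y\succeq\hat M^{-1}\}$, so its minimum is attained at $Y=\hat M^{-1}$ and equals $\mathrm{Trace}(\hat M^{-1}\,\mathrm{Cov}(v))$. Applying Schur complement twice---once to express $Y\succeq\hat M^{-1}$ and once to embed the Schur form of $M$ inside that---assembles the $3\times 3$ LMI in \eqref{Big_matrix}. Combined with the nonnegativity constraints on the edge weights, the decomposition $L=\sum_{e\in E}w(e)L_{e}$, and the spectral LMI from Step~1, this recovers the full SDP. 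The main obstacle, and the piece most worth scrutinizing, is the bookkeeping of the rank-one corrections: the shift $\alpha\mathbf 1\mathbf 1^{T}$, the compensating constant $\beta$ in the objective, and the structure that preserves positivity in the direction of $\mathbf 1$ must be chosen consistently; once they are, the two Schur complements are routine.
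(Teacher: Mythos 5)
Your proposal follows essentially the same route as the paper's own proof: the marginal-stability two-sided LMI, the factorization $L-\tfrac{\gamma}{2}L^{2}=L-L\bigl(\tfrac{2}{\gamma}I\bigr)^{-1}L$ realized as the Schur complement of a $2\times 2$ block, a rank-one shift along $\mathbf 1$ to replace the pseudoinverse by an inverse (with the compensating constant absorbed into the objective), and a second Schur complement introducing the epigraph variable $Y\succeq \hat M^{-1}$. Your version is, if anything, more explicit than the paper about calibrating the rank-one correction and the resulting constant offset, which is exactly the bookkeeping the paper glosses over.
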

\vspace{.1cm}
\begin{proof}
  We need the following condition to hold in order to guarantee that the network is marginally stable:
\begin{equation}
\mathbf 0~ \preceq~I -  \frac{1}{2}\left (\gamma L + (1/n)\mathbf 1 \mathbf 1^T\right)~ \preceq~ I. 
\label{stability}
\end{equation}  
 Then, according to \eqref{stability} and the Schur complement condition for positive semidefiniteness it follows that
\begin{equation}
	\begin{bmatrix}
	    	L +\frac{1}{\gamma n}\mathbf 1 \mathbf 1 ^{T}&  L \\
    		 L  & \frac{2}{\gamma} I   
  	\end{bmatrix} ~\succeq~ 0. 
	\label{b-matrix}
\end{equation}
  Again, using the Schur complement condition for positive semidefiniteness, \eqref{Big_matrix} and \eqref{b-matrix}, we get the following equivalent condition
\begin{equation}
  Y - \frac{1}{\gamma n} \mathbf 1 \mathbf 1 ^{T}~ \succeq~ \left (L - \frac{\gamma}{2} L^2 \right)^{\dag},
\end{equation}
which completes the proof. 
\end{proof}

%
%



%

%


\section{Illustrative Numerical Simulations}
\label{sec:numerical}
In this section, we support our theoretical developments with illustrative examples that provide better insight into the role of the underlying graph topology in the DST network.
\begin{figure}
  \centering
  \includegraphics[trim = 5 5 5 5, clip,width=0.5\textwidth]{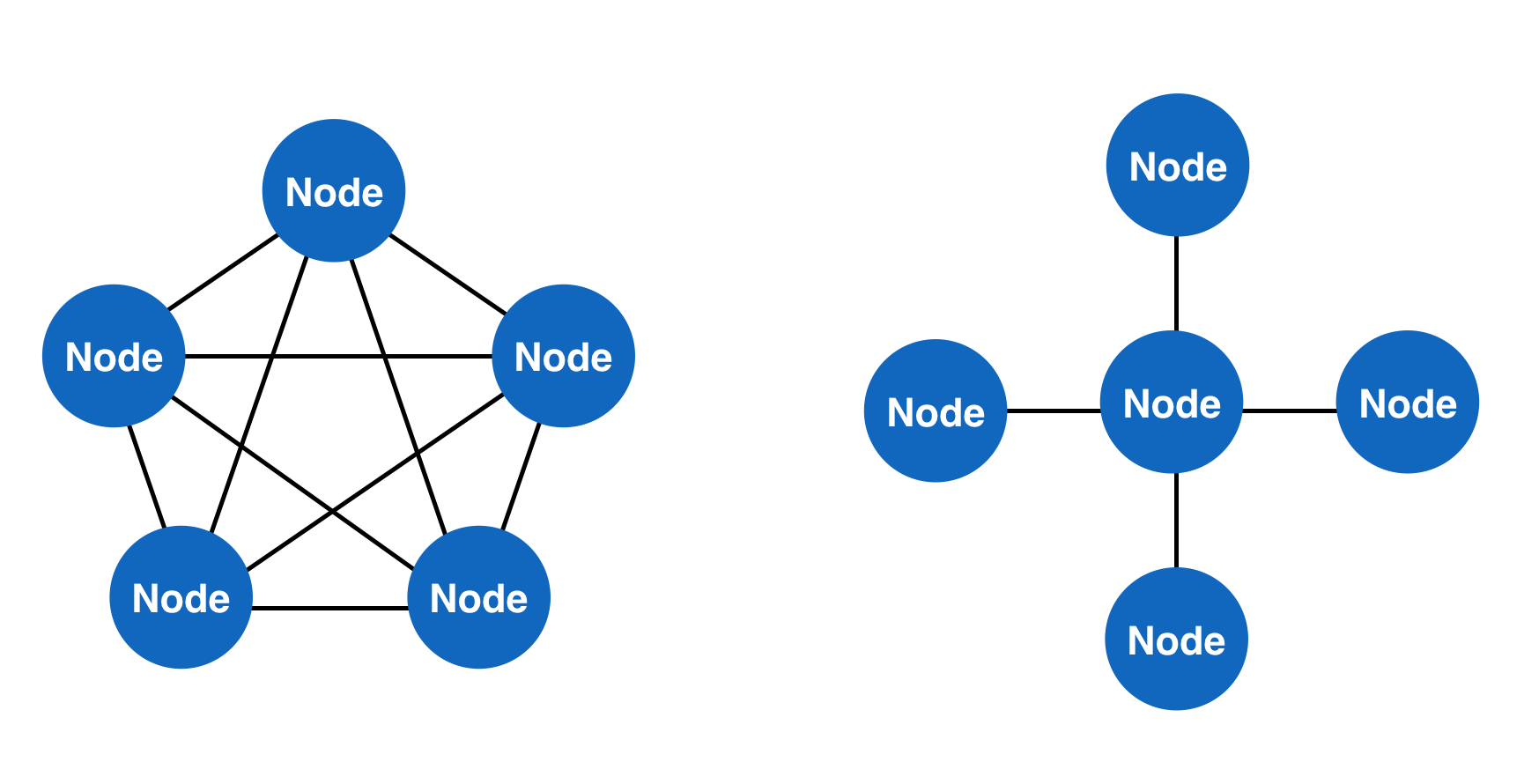}
  \caption{ Two DST networks with five servers over a complete graph and a star graph.}
  \label{exam-1-0}
\end{figure}
\vspace{.1cm}
\begin{example}
  Consider two DST networks with five servers over complete graph $\mathcal K_5$ and star graph $\mathcal S_5$ as depicted in Fig. \ref{exam-1-0}. Let us assume that the update cycle is given and fixed (without loss of generality $\gamma = 1 $). Based on the results presented in Theorem \ref{th:2}, one can obtain the optimal weight links for both networks to get the fastest DST (See Table II).
\begin{table}[h]
  \centering
  \caption{Optimal link weights.}
  \label{tab}
  \begin{tabular}{|c|c|c|}
    \cline{1-3}
     & Complete Graph $\mathcal K_5$ & Star Graph $\mathcal S_5$   \\ \hline \hline
   Optimal Weight & $w(e)=1/5$ & $w(e)=1/3$  
    \\ \cline{1-3}
  \end{tabular}
\end{table}

For each network the weights are uniform since their underlying graphs are edge-transitive. 
\end{example}

\vspace{.1cm}
\begin{example}
 Consider two DST networks with $10$ servers over graphs depicted in Figs. \ref{exam-1-1} and \ref{exam-1-2}. Let us assume that in both graphs all links have a weight of one. Based on the results presented in Theorem \ref{th:2}, one can obtain the optimal update cycle for both networks to get the fastest DST (see Table III).
\begin{table}[h]
  \centering
  \caption{Optimal update cycles.}
  \label{tab}
  \begin{tabular}{|c|c|c|}
    \cline{1-3}
     & Graph $\#1$ & Graph $\#2$   \\ \hline \hline
   Optimal update cycle & $ \Delta t = 0.4226 $ & $\Delta t=0.2222$  
    \\ \cline{1-3}
  \end{tabular}
\end{table}

Moreover, let us consider $1,000$ clients that are randomly assigned to $10$ servers such that each server has $100$ clients. 
  Fig. \ref{exam} shows the simulation results that are obtained for each of these two DST networks given a randomly generated usage curve over $1,000$ cycles. As expected, the overall performance of the DST over graph $\#2$ is better (\ie, over-throttling is less severe) than the performance of the DST over graph $\#1$ (for small time step $\gamma = 0.02$). 
  
 { 
\begin{table}[h]
  \centering
  \caption{Overall network performance measures.}
  \label{tab}
  \begin{tabular}{|c|c|c|}
    \cline{1-3}
     & Graph $\#1$ & Graph $\#2$   \\ \hline \hline
  $\Phi_\mathbf{cr}$ & $ 0.9969 $ & $0.9727$  \\
    \cline{1-3}
    $\Phi_\mathbf{ss}$ & $ 334.7965 $ & $69.3075$  \\
     \cline{1-3}
    Over-throttling $\%$ & $6.2 \%$ & $2.8 \%$  \\
     \cline{1-3}
  \end{tabular}
\end{table}
%
  

\begin{figure}
  \centering
  \includegraphics[trim = 5 5 5 5, clip,width=0.5\textwidth]{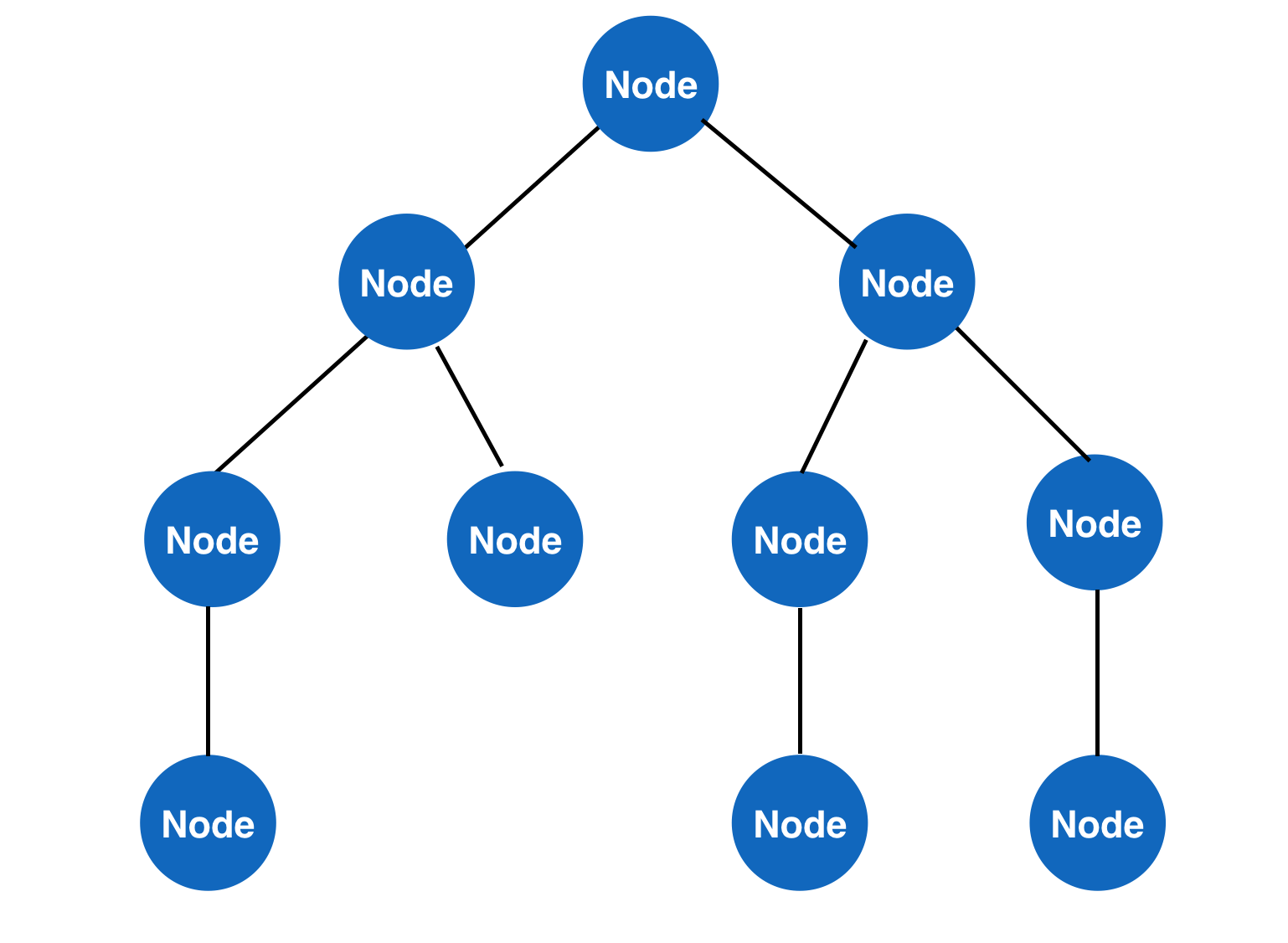}
  \caption{ A DST network with $10$ servers over a tree graph (graph $\#1$).}
  \label{exam-1-1}
\end{figure}

\begin{figure}
  \centering
  \includegraphics[trim = 5 5 5 5, clip,width=0.48 \textwidth]{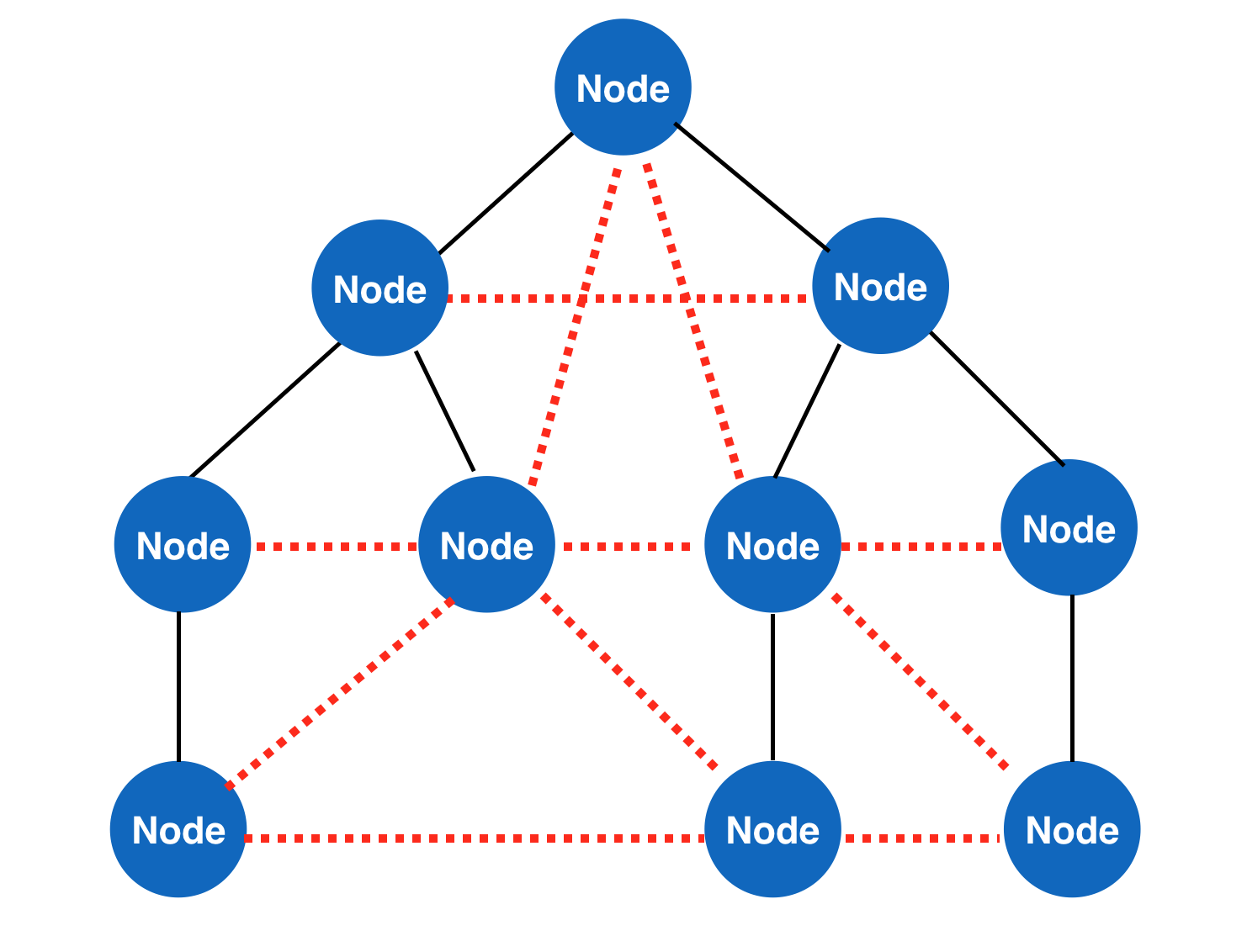}
  \caption{ A DST network with $10$ servers over a tree graph with some additional red dotted links (graph $\#2$).}
  \label{exam-1-2}
\end{figure}

\begin{figure}[t]
  \centering
 \includegraphics[width=.9\textwidth]{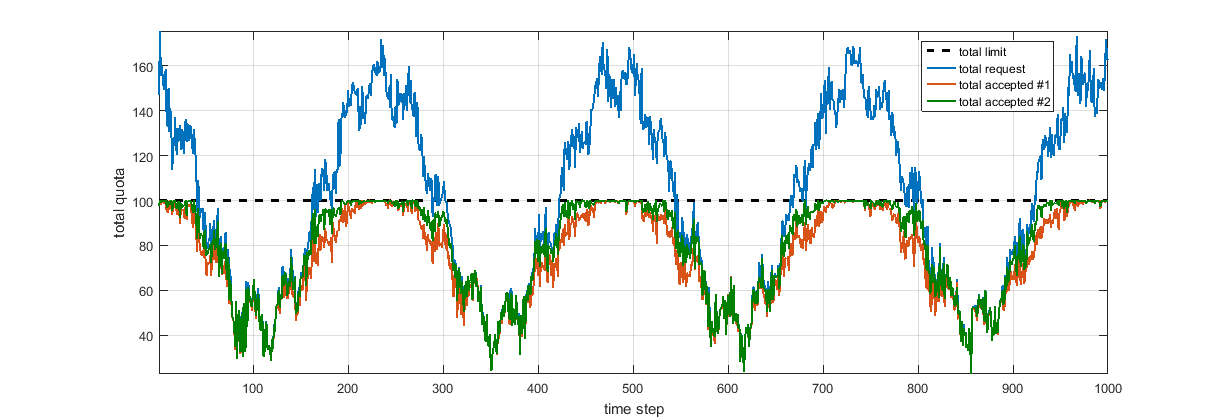}
  \caption{The simulation results for two DST networks with $\gamma = 0.02 $ over graphs $\#1$ and $\#2$ with $10$ nodes. }
  \label{exam}
\end{figure}

In Fig. \ref{exam}, the blue curve shows the total number of requests versus time, \ie, {$r_\text{total}(k)$, }
the black dashed line presents the total limit for the entire network, \ie, {$l_\text{total}$,}
and the red and green curves show the total accepted requests for graph $\#1$ and graph $\#2$ respectively, \ie, {$a_\text{total}(k)$. }
We should note that, the ideal curve for total accepted requests is given by \eqref{a_ideal}.
Therefore, the percentage of over-throttling can be defined as follows
\[ \text{ Over-throttling \%} := \frac{\sum_{k=0}^{N} \left (a_\text{ideal}(k) - a_\text{total}(k) \right) }{\sum_{k=0}^{N} a_\text{ideal}(k)} \times 100,\]
where $N$ is the number of cycles (in this example $1,000$).}


%
\end{example}
{\section{Throttling Algorithms at the Node Level}
\label{sec:algorithms}

In this part, we focus on the structure of each node. Each node consists of a server with its clients (see, for example, Fig. \ref{Fig_micro}). Besides update law \eqref{model_DST_DT}, it has its own throttling algorithm to handle its clients. 
Let us assume that server $i$ has $c_i$ clients, and $r_i^{(j)}(k)$ is the number of requests received  by server $i$ from its $j$-th client at time $k$. Therefore, the total number of client requests received by server $i$ at time $k$ is 
\[r_i(k)=\sum_{j=1}^{c_i}r_i^{(j)}(k).\] Let's define $x_i^{(j)}(k)$ as a limit on the number of requests of $j$-th client of server $i$ that is allowed to service at time $k$. 
In each update cycle, first each server (let's say server $i$) collects all metrics from its clients (\ie, number of requests $r_i^{(j)}(k)$ for $j \in \{1,2, \cdots, c_i \}$) as well as collecting its neighbors' states and performance measures (\ie, a local aggregated view of usage metrics), then aggregates all metrics and updates its state, and finally pushes new limits to its clients (\ie, $x_i^{(j)}(k)$ for $j \in \{1,2, \cdots, c_i \}$). {It also communicates its local aggregated view of usage metrics to its neighboring nodes' servers. }

\begin{figure}[t]
  \centering
  \includegraphics[trim = 5 5 5 5, clip,width=0.4\textwidth]{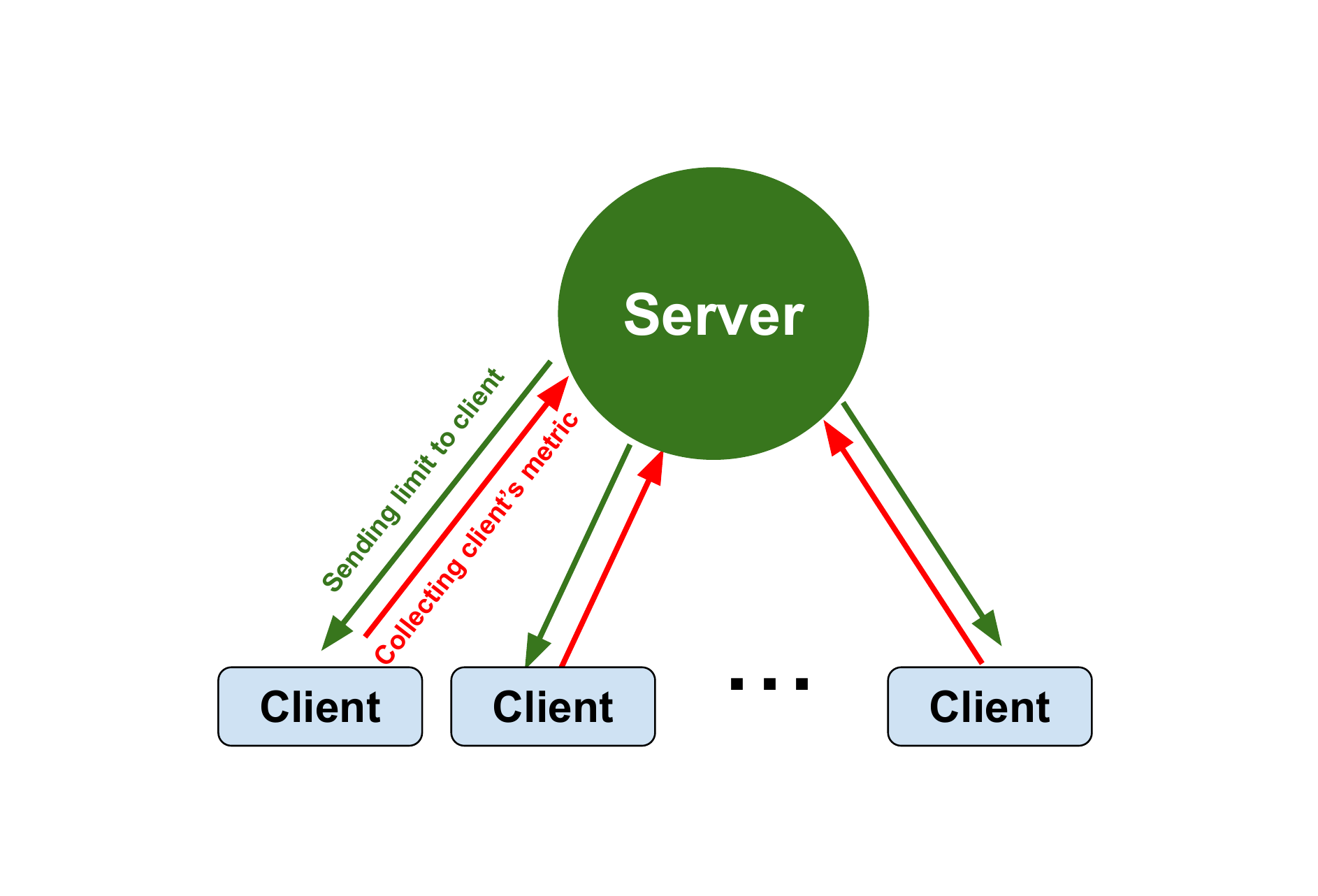}
  \caption{A server with its clients.}
  \label{Fig_micro}
\end{figure}

At the node level, viable throttling algorithms can be considered to throttle same amount, ratio, or the logarithm of ratio from all tasks until the total limit is reached (please see nodal performance measures in Table I). In what follows, we present two simple throttling algorithms with their high-level examples which can be used by each node.

{The first algorithm keeps the throttled ratios uniform over all tasks and is defined in Algorithm 1.}
\begin{algorithm}[t]
    \SetKwInOut{Input}{Input}
    \SetKwInOut{Output}{Output}

{
    \Input{$r_i^{(j)}(k)$ for $j \in \{1,2, \cdots, c_i \}$ and $x_i(k)$.}
            \vspace{.2cm}
    \Output{$x_i^{(j)}(k)$ for $j \in \{1,2, \cdots, c_i \}$. }
    		\vspace{.4cm}
}
   
    $r_{i} (k) := \sum_{j=1}^{c_i} r_i^{(j)}(k)$\\
    \eIf{$r_i (k) \leq l_i(k)$}
      {
        \For{$j = 1$ {\it to} $c_i$}
        {$x_i^{(j)}(k) := r_i^{(j)}(k)$}
      }
      {
        \For{$j = 1$ {\it to} $c_i$}
        {$x_i^{(j)}(k) :=  \frac{x_i(k)}{r_{i} (k)} r_i^{(j)}(k)$}
      }
    \caption{A simple balancing algorithm for keeping throttled ratios uniform at server $i$ {at time $k$}}
\end{algorithm}
The second algorithm demonstrates a simple load balancing algorithm which distributes incoming requests across all tasks as uniformly as possible by throttling large number of requests. We present the steps of this algorithm in Algorithm 2.

 \begin{algorithm}[t]
    \SetKwInOut{Input}{Input}
    \SetKwInOut{Output}{Output}

    \Input{$r_i^{(j)}(k)$ for $j \in \{1,2, \cdots, c_i \}$ and $x_i(k)$.}
	    \vspace{.2cm}
    \Output{$x_i^{(j)}(k)$ for $j \in \{1,2, \cdots, c_i \}$}
    	\vspace{.4cm}

   $r_{i} (k) := \sum_{j=1}^{c_i} r_i^{(j)}(k)$\\
   
   \eIf{$r_i (k) \leq l_i(k)$}
      {
        \For{$j = 1$ {\it to} $c_i$}
        {$x_i^{(j)}(k) := r_i^{(j)}(k)$}
      }
      {
        {sort} $r_i^{(j)}(k)$ for $j \in \{1,2, \cdots, c_i \}$ $\rightarrow$ $r_i^{(j)\uparrow}(k)$\\
        $s:=0$\\
        $l:= \frac{x_i(k)}{c_i}$\\
        \For{$j = 1$ {\it to} $c_i$}
        {
        \If{$l > r_i^{(j)\uparrow}(k)$}
        {
        $s=s-r_i^{(j)\uparrow}(k) +l$\\
	$l=\frac{s}{c_i-j}+l$
        }
        }
        \For{$j = 1$ {\it to} $c_i$}
        {$x_i^{(j)}(k) := l$}
      }
    \caption{A simple load balancing algorithm with throttling large number of requests at server $i$ at time $k$}
\end{algorithm}

%
%


We now present two high-level examples according to these algorithms. Figs. \ref{fig-algorithm}-a  and \ref{fig-algorithm}-b  depict numbers of requests and throttled requests for server $i$ based on Algorithms 1 and 2, respectively.  Each bar shows the number of requests per client. Blue bars show clients' requests. The red area shows the throttled request traffic. The clients are sorted by by number of requests in ascending order. The blue dashed line in Fig. \ref{fig-algorithm}-b shows the allowed limit on each task. We should note that the total number of request at this server (server $i$) is the area of all bars, \ie, $r_i(k)$, and the total allowed request is the area of all blue bars, \ie, $a_i(k)$.


\section{Conclusion}
\label{sec:conclusion}
In this paper, we investigated performance deterioration (\eg, over-throttling) of distributed system throttlers with respect to external uncertainties and server time cycles. We developed a graph-theoretic framework to relate the underlying structure of the system to its overall performance measure. We then compared the performance/robustness of the proposed distributed system throttlers with different underlying graphs. A promising research direction is to investigate the overall performance measure of DST networks with respect to the other nodal performance measures.

\begin{figure}
\centering
  \includegraphics[trim = 175 175 175 175, clip,width=0.45\textwidth]{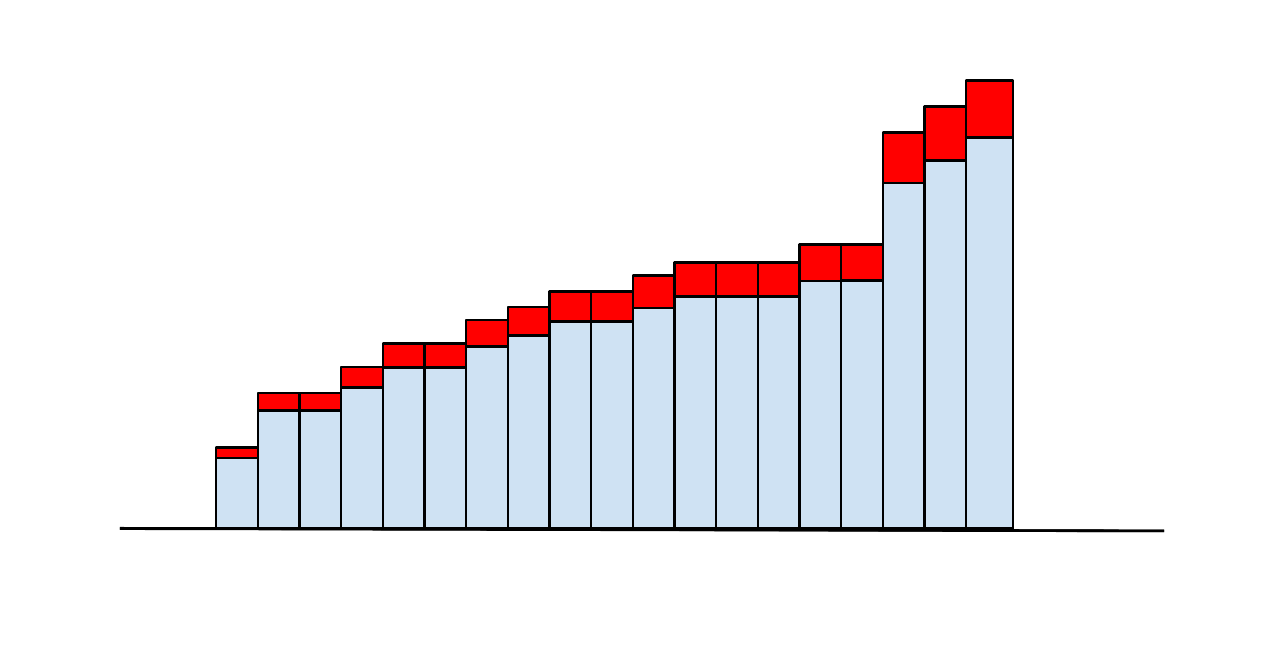}
  \caption*{(a)}
  \includegraphics[trim = 175 175 175 175, clip,width=0.45\textwidth]{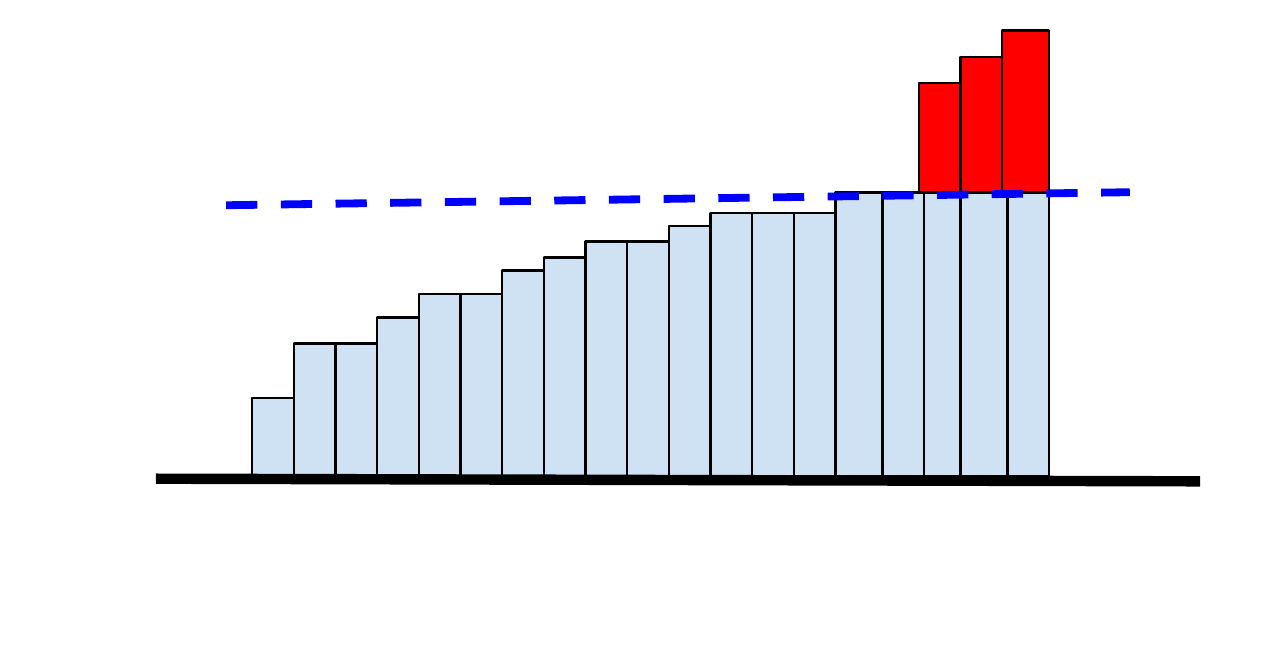}
  \caption*{(b)}
\caption{Requested quota (\ie, number of requests) and throttled requests for server $i$: (a) Algorithm 1 which keeps the throttled ratios uniform over all tasks, and (b) Algorithm 2 which throttles large number of requests. The blue dashed line shows the resulting allowed level $l$ in Algorithm 2.}
\label{fig-algorithm}
\end{figure}

\section*{Acknowledgment}
The authors would like to thank Dr. Sergei Vassilvitskii and Dr. Hossein Azari for their valuable comments and suggestions to improve the paper.

\appendix[Other Nodal Performance Measures]
\label{sec:appendix}
In this part, we present the dynamics of the DST for other nodal performance measure defined in Table I  ({\it Case I} is studied in Section \ref{sec:2}).

\subsubsection*{ Case II} Assume that the performance measure at server $i$ is given by
\begin{equation}
  p_i(k)~=~\frac{ r_i(k) - x_i(k)}{r_i(k)},
\end{equation}
and $r_i(k) >0$.
Then, we can rewrite \eqref{model_DST_DT} in the following form
\begin{eqnarray}
&& \hspace{-.6cm}  x(k+1)=\gamma \, L \, \diag\left[{r_1(k)^{-1}}, \cdots , {r_n(k)^{-1}}\right] \left( r(k)- x(k) \right) \nonumber \\
  &&~~+~ x(k),~~k\in \mathbb Z_{+}.  
\end{eqnarray}
Let assume that $r_i(k) = \mathfrak r$ for all $k \in \Z_+$ and $i \in \left\{1,2,\cdots,n \right \}$. So, we have
\begin{equation}
  x_i(k)~=~- \mathfrak r \left (p_i(k) -1 \right).
\end{equation}
Then, it follows that
\begin{equation}
p(k+1)~=~\left (I - \frac{\gamma}{\mathfrak r} L \right ) p(k).
\end{equation}
%
%
In this case, in addition to the coupling graph and the update cycle, the values of $\mathfrak r$ plays a role in the convergence rate of the network (same for other overall performance measures).

\subsubsection*{Case III} Next, we assume that the performance measure at server $i$ is given by
\begin{equation}
  p_i(k)~=~\log r_i(k)- \log x_i(k).
\end{equation}
Assume that $r_i(k) = \mathfrak r$, for all $k \in \Z_+$ and $i \in \left\{1,2,\cdots,n \right \}$. Therefore, we get
\begin{equation}
  x_i(k)~=~\mathfrak r \, e^{-p_i(k)}.
\end{equation}
Then, it follows that
\begin{equation}
\exp \left(-p(k+1) \right)~ =~ \exp \left(-p(k) \right) + \frac{\gamma}{\mathfrak r} \, L\,  p(k),~~k\in \mathbb Z_{+},
\label{947}
\end{equation}
where $\exp p(k):= \left [ e^{p_1(k)}, \cdots, e^{p_n(k)} \right ]^T$. Let us define 
\begin{equation} 
\bar p(k) := \exp \left(-p(k) \right),
\label{951}
\end{equation}
using \eqref{947} and \eqref{951}, it follows that
\begin{equation}
\bar p(k+1)~ =~ \bar p(k)-\frac{\gamma}{\mathfrak r} \, L\,  \ln \bar p(k),~~k\in \mathbb Z_{+},
\label{952}
\end{equation}
where 
\[\ln \bar p(k)~:=~ \big [ \ln {\bar p_1(k)}, \cdots, \ln {\bar p_n(k)} \big ]^T.\]
%

\subsubsection*{Case IV} Finally, let us assume that 
\[p_i(k)~=~x_i(k),\]
for $i=1, \cdots, n$. Then, dynamics \eqref{model_DST_DT} can be rewritten in the following form
%
\begin{equation}
 p(k+1)~=~\left (I + \gamma L \right)p(k).
\end{equation}
In this case, based on Lemma \ref{lemma2} the system is unstable, which means the state trajectories are unbounded. Therefore we consider additional constraints to make them bounded as follows: the state of node $i$ at time $k+1$ is not updated (\ie, $x_i(k+1) ~=~ x_i(k)$) and its information at time $k$ is not used for updating the states of neighboring nodes at time $k+1$ when 
\begin{itemize}
\item[-] $x_i(k) = r_i(k)$ and $x_i(k+1) - x_i(k) > 0$,
\item[-] $x_i(k) = 0$ and $x_i(k+1)-x_i(k) < 0$.
\end{itemize}
We should note that also in this case the following equality holds
\[ \sum_{i=1}^nx_i(k)~=~ \sum_{i=1}^n x_i(0).\]
In a steady-state, each state $x_i$ reaches its boundaries (i.e., $0$ and $r_i$) or a value between them.

\begin{spacing}{1.2}
\bibliography{ref/main_Milad}
\end{spacing}

\end{document}